\documentclass[pra,twocolumn,showpacs]{revtex4}

\usepackage{amsmath}
\usepackage{amsthm}

\theoremstyle{remark}

\begin{document}

\newtheorem{prop}{Proposition}

\title{Monogamy inequality for entanglement and 
local contextuality}
\author{S. Camalet}
\affiliation{Laboratoire de Physique Th\'eorique 
de la Mati\`ere Condens\'ee, UMR 7600, Sorbonne 
Universit\'es, UPMC Univ Paris 06, F-75005, 
Paris, France}

\begin{abstract}
We derive a monogamy inequality for entanglement 
and local contextuality, for any finite bipartite system. 
It essentially results from the relations between 
the entropy of a local state and the entanglement 
of the global state, and between the purity of a state, 
in the sense of majorization, and its ability to violate 
a given state-dependent noncontextuality inequality. 
We build an explicit entanglement monotone that 
satisfies the found monogamy inequality. 
An important consequence of this inequality, is that 
there are global states too entangled to violate 
the local noncontextuality inequality.
\end{abstract} 

\pacs{03.65.Ud, 03.65.Ta, 03.67.Mn}

\maketitle

\section{Introduction}  

One of the most important property of quantum 
entanglement, is known as entanglement 
monogamy \cite{HHHH,CKW}. Consider two 
systems, say A and B, in a maximally entangled 
state. Since this state is pure, there is no correlation 
between A and any third system, say C. 
In this extreme case, the entanglement 
between A and B, is maximum, and that 
between A and C, vanishes. In the general case, 
there is a trade-off between the two amounts 
of entanglement. Expressing it in quantitative terms, 
requires to specify a measure of entanglement 
\cite{HHHH,PV,V}. Monogamy inequalities have 
been derived, first for three qubits, in terms of 
squared concurrence \cite{CKW}, and then, 
for larger systems, and using different measures 
of entanglement 
\cite{OV,KW,BXW,SBYYC,LTSL,LDHPAW}. 
Related works consider nonlocality tests based 
on the Clauser-Horne-Shimony-Holt (CHSH) 
inequality \cite{B,CHSH}. 
When this inequality is violated for A and B, it is 
necessarily satisfied for A and C, if the same 
measurements are performed on A in both tests 
\cite{To,TV}. In contrast, monogamy inequalities 
for entanglement, do not involve specific 
observables, since the amount of entanglement 
between two systems, depends only on 
their common quantum state. 

It has been shown that 
the Klyachko-Can-Binicio\u glu-Shumovski (KCBS)
noncontextuality inequality \cite{KCBS}, for A only, 
cannot be violated toghether with the CHSH locality 
inequality \cite{KCK}, or the ${\cal I}_{3322}$ 
inequality \cite{SR}, when the same measurements 
are carried out on A in both tests. One can thus 
wonder whether there is a monogamy relation 
between entanglement and local contextuality. 
Such a relation must involve only the states of 
the global system, and of the considered local 
system, not particular observables. The ability 
of a state to disobey a noncontextuality inequality 
with few observables, is determined by 
its eigenvalues \cite{KK,XSC,PRA,RH}. 
Moreover, if a state is less pure, in the sense 
of majorization \cite{HLP,MO}, than a state that 
always satisfies a noncontextuality inequality, 
then it also cannot violate this inequality. 
The purity of the state of A, and the entanglement 
between A and B, clearly influence each other. 
To see it, consider the following two extreme cases. 
If A is in a pure state, A and B are uncorrelated. 
If A and B are maximally entangled, the reduced 
density operator for A is the maximally mixed state, 
which is majorized by any other one. 

In this paper, we derive a monogamy inequality 
for entanglement and local contextuality, for any 
finite bipartite system. To do so, we exploit 
the above mentioned relations between purity 
and contextuality, and between entanglement 
and local purity. We first show, in Sec. \ref{Iele}, 
that, for any entanglement monotone, 
the entanglement between A and B, cannot exceed 
a function of the state of A, that has the essential 
properties of an entropy \cite{BZHPL}. This result 
expresses quantitatively how the purity of a local 
state, and the entanglement of the global state, 
constrain each other. Then, in Sec. \ref{eni}, 
we define, from any given state-dependent 
noncontextuality inequality, involving dichotomic 
observables, an entropic measure, which dictates, 
for a specific size of A, wether its state can disobey 
the inequality. Finally, in Sec. \ref{Melc}, we build 
an explicit entanglement monotone, which is upper 
bounded by this particular entropy function. 
An important consequence of the found monogamy 
inequality, is that there are global states too 
entangled to violate the local noncontextuality 
inequality. For four-level systems and the CHSH 
inequality, we obtain a simple condition, in terms 
of a readily computable quantity, that determines 
such states, in Sec. \ref{Cem}.

\section{Relation between entanglement and local 
entropy}\label{Iele}

We consider a measure $E$ of the entanglement 
between any two finite systems. The value $E(\rho)$, 
where $\rho$ is the state of the global system, 
consisting of the local systems A and B, is positive, 
and vanishes if $\rho$ is not entangled. Moreover, 
it does not increase when two operators carry out 
local operations, and communicate classically. 
Such a function $E$ is an entanglement monotone. 
More specifically, $E[\Lambda(\rho)] \le E(\rho)$ for 
transformations $\Lambda$ composed of local 
operations $\rho \mapsto \sum_k 
M_k^{\phantom{\dag}}\rho M_k^{\dag} $, and maps 
of the form $\rho \mapsto \sum_k 
M_k^{\phantom{\dag}}\rho M_k^{\dag} 
\otimes | k \rangle \langle k |$, where $M_k$ acts on 
one local system only, 
$\sum_k M_k^{\dag}M_k^{\phantom{\dag}}$ is equal 
to the corresponding identity operator, 
and $| k \rangle$ are orthonormal states of 
an ancilla close to the other system \cite{HHHH}. 
We reiterate that $E$ is defined for local systems of 
any sizes. Clearly, the last map above transforms 
states of a system, into states of a different system. 
Moreover, $M_k$ can be a linear operator from 
the Hilbert space of a system, to that of one of 
its subsystems, or to that of a local larger system 
\cite{V}. 

We are interested in the constraint on the reduced 
density operator of a local system, set by 
the entanglement $E(\rho)$. To express it, we define, 
for states $\rho_\mathrm{A}$ of a $d$-level system A, 
\begin{equation}
S_d(\rho_\mathrm{A}) \equiv 
\max_{\rho \in {\cal C}(\rho_\mathrm{A})} E(\rho) ,
\label{S}
\end{equation}
where ${\cal C}(\rho_\mathrm{A})$ is the set of 
all states $\rho$ of all composite systems consisting 
of A, and another system, such that the reduced 
density operator for A is $\rho_\mathrm{A}$. 
For any system B, and any state $\rho$ of 
the global system AB, consisting of A and B, 
$$S_d(\operatorname{tr}_\mathrm{B} \rho) 
\ge E(\rho),$$ 
where $\operatorname{tr}_\mathrm{B}$ denotes 
the partial trace over B. As we will see, the equality 
is reached when $\rho$ is pure. We show below that 
the functions \eqref{S} have the essential properties 
of the familiar entropies (von Neumann, R\'enyi, 
Tsallis, \ldots) \cite{BZHPL}. Thus, the above 
inequality expresses how the purity of the local 
state $\operatorname{tr}_\mathrm{B} \rho$, and 
the entanglement of the global state $\rho$, 
constrain each other. We remark that this inequality 
is not necessarily satisfied if $S_d$ is replaced by 
an arbitrary entropy function. For distillable 
entanglement, entanglement cost, entanglement 
of formation, and relative entropy of entanglement, 
eq.\eqref{S} gives the von Neumann entropy 
\cite{HHHH,BBPS,VP}. For robustness and negativity, 
$S_d(\rho_\mathrm{A})$ is simply related to 
the 1/2-R\'enyi entropy \cite{VT,ViWe}.
\begin{prop} \label{Sp}
The functions \eqref{S} satisfy
\begin{equation}
S_d(\rho_\mathrm{A})=s({\boldsymbol p}) , 
\label{Ss}
\end{equation}
where $\boldsymbol p$ is the vector made up 
of the nonzero eigenvalues of $\rho_A$, 
in decreasing order, and $s$ does not depend 
on $d$, vanishes for ${\boldsymbol p}=1$, 
and obeys 
$s({\boldsymbol q}) \le s({\boldsymbol p})$ when 
${\boldsymbol q}$ majorizes ${\boldsymbol p}$. 
\end{prop}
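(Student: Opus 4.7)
The plan is to reduce the maximization \eqref{S} to pure-state entanglement and then invoke standard pure-state facts.

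First I would establish that the maximum in \eqref{S} is attained on the purifications of $\rho_{\mathrm{A}}$. Given any $\rho\in\mathcal{C}(\rho_{\mathrm{A}})$ living on $\mathrm{A}\otimes\mathrm{B}$, introduce a purification $|\phi\rangle$ on $\mathrm{A}\otimes\mathrm{B}\otimes\mathrm{C}$. Tracing out $\mathrm{C}$ is a local operation on the non-A side, so the monotonicity property recalled in the excerpt gives $E(\rho)\le E(|\phi\rangle\langle\phi|)$. Since $|\phi\rangle$ is itself a purification of $\rho_{\mathrm{A}}$, it belongs to $\mathcal{C}(\rho_{\mathrm{A}})$, so the supremum in \eqref{S} equals $E(|\psi\rangle\langle\psi|)$ for any purification $|\psi\rangle$ of $\rho_{\mathrm{A}}$.

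Second, I would argue that this value depends only on $\boldsymbol p$, independently of $d$. Any two purifications of $\rho_{\mathrm A}$ are related by an isometry on the purifying system, and such isometries $V$ (with $V^\dagger V=I$) are admissible single-Kraus $M_k$ in the sense of the excerpt; they are reversible by $V^\dagger$ acting on their image, so $E$ is preserved in both directions. The same argument applied to a local isometry on A shows that embedding $\rho_{\mathrm{A}}$ into a larger Hilbert space does not change the value. Hence $S_d(\rho_{\mathrm{A}})$ defines a single function $s(\boldsymbol p)$ of the nonzero eigenvalues of $\rho_{\mathrm A}$.

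Third, the remaining two properties reduce to well-known pure-state results. When $\boldsymbol p=1$, the purification has Schmidt rank one and is therefore a product vector, so $E$ vanishes on it and $s(1)=0$. For the majorization property I would invoke Nielsen's theorem: when $\boldsymbol q$ majorizes $\boldsymbol p$, a pure bipartite state with Schmidt squares $\boldsymbol p$ can be converted deterministically into one with Schmidt squares $\boldsymbol q$ by a one-way LOCC protocol built from exactly the local operations and ancilla-flagged maps listed in the excerpt, and monotonicity then forces $s(\boldsymbol q)\le s(\boldsymbol p)$.

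I expect the main obstacle to be the second step, namely verifying that the admissible operations of the excerpt are rich enough (i) to interconvert purifications across different ambient dimensions and (ii) to implement a Nielsen-type protocol; in each instance one must check that the Kraus normalization $\sum_k M_k^\dagger M_k=I$ is met and that the ancilla-flagged map correctly plays the role of classical communication.
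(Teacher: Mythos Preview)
Your proposal is correct and follows essentially the same route as the paper: purify an arbitrary $\rho\in\mathcal{C}(\rho_{\mathrm A})$, use that the partial trace over the purifying ancilla is a local operation on the non-A side to obtain $E(\rho)\le E(|\Psi\rangle\langle\Psi|)$, observe that the pure-state value depends only on the Schmidt coefficients, note that Schmidt rank one gives a product state, and invoke Nielsen's theorem for the majorization monotonicity. The only cosmetic difference is that the paper relates different purifications by local \emph{unitaries} on a common $\mathrm{AB}$ and leaves the $d$-independence implicit in the Schmidt picture, whereas you make it explicit via local isometries on A and on the purifying system; both are covered by the class of $M_k$ allowed in the excerpt, so your concern about step two is already resolved by the framework.
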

\begin{proof}
Consider any system B', and any state 
$\rho \in {\cal C}(\rho_\mathrm{A})$ of the composite 
system AB'. Denote its eigenvalues by $\lambda_m$, 
and its eigenstates by $| \psi_m \rangle$. Let us 
introduce a third system, say B'',  which constitutes, 
together with B', system B. Provided the Hilbert 
space dimension of B'' is large enough, $\rho$ can be 
written as $\rho=\operatorname{tr}_\mathrm{B''}
| \Psi \rangle \langle \Psi |$, where 
$| \Psi \rangle=\sum_m 
\sqrt{\lambda_m} | \psi_m \rangle | \phi_m \rangle$ 
is a pure state of system AB, with orthonormal 
states $| \phi_m \rangle$ of B''. As 
$\operatorname{tr}_\mathrm{B''}$ is a local 
operation, on B, $E(\rho) \le s$ where 
$s=E(| \Psi \rangle \langle \Psi |)$. 

Since $\operatorname{tr}_\mathrm{B}
| \Psi \rangle \langle \Psi |=\sum_i p_i 
| i \rangle \langle i |$, where $p_i$ are 
the nonzero eigenvalues of $\rho_\mathrm{A}$, 
and $|i \rangle$ are the corresponding eigenstates, 
$| \Psi \rangle=\sum_i \sqrt{p_i} 
| i \rangle | \chi_i \rangle$, where 
$| \chi_i \rangle$ are orthonormal states 
of B. For any pure state $| \Psi' \rangle$ of AB, 
with Schmidt coefficients $\sqrt{p_i}$, there are 
unitary operators $U_\mathrm{A}$ and 
$U_\mathrm{B}$, acting on A and B, respectively, 
such that $| \Psi' \rangle=U_\mathrm{A}
\otimes U_\mathrm{B} | \Psi \rangle$. 
Thus, $| \Psi \rangle  \langle \Psi |$ and 
$| \Psi' \rangle \langle \Psi' |$ can be transformed 
into each other by local operations. Consequently, 
$E(| \Psi' \rangle \langle \Psi' |)=s$, and hence, 
$s$ is a function of $\boldsymbol p$ only. Since 
$\rho$ is an arbitrary state of 
${\cal C}(\rho_\mathrm{A})$, 
and $| \Psi \rangle \langle \Psi | 
\in {\cal C}(\rho_\mathrm{A})$, 
$S_d(\rho_\mathrm{A})=s$.

If $\boldsymbol p=1$, $| \Psi \rangle$ is a product 
state, and so $s=0$. 

Consider $| \Phi \rangle=\sum_i 
\sqrt{q_i} | i \rangle | \chi_i \rangle$ 
with $\boldsymbol q$ majorizing $\boldsymbol p$. 
We have $s({\boldsymbol p})
=E(| \Psi \rangle \langle \Psi |)$ and 
$s({\boldsymbol q})=E(| \Phi \rangle \langle \Phi |)$. 
Since $| \Psi \rangle \langle \Psi |$ can be changed 
into $| \Phi \rangle \langle \Phi |$ by local operations 
and classical communication \cite{N}, 
$s({\boldsymbol p}) \ge s({\boldsymbol q})$. 
\end{proof}

Relation \eqref{Ss} means not only that
\begin{equation}
S_d(U \rho_\mathrm{A} U^\dag) 
= S_d(\rho_\mathrm{A}) , \label{U}
\end{equation}
where $U$ is any unitary operator of A, 
but also that
\begin{equation}
S_{d+1} \left( \sum_{i=1}^{d} p_i 
|\tilde \imath \rangle \langle \tilde \imath | \right) 
= S_d \left( \sum_{i=1}^{d} p_i 
| i \rangle \langle i | \right) \label{dim} ,
\end{equation}
where $\{ | i \rangle \}_{i=1}^{d}$ and 
$\{ | \tilde \imath \rangle \}_{i=1}^{d+1}$ are 
orthonormal bases of the considered Hilbert spaces, 
and the probabilities 
$p_i$ obey $\sum_{i=1}^{d} p_i=1$. The classical 
form of equation \eqref{dim} is known as 
the expansibility property, and is an essential 
requirement for an entropic measure \cite{BZHPL,K}.

\section{Entropies from  noncontextuality inequalities}
\label{eni}

Our aim is to study the influence of the entanglement 
between systems A and B, on contextuality tests 
involving only A. This local contextuality can be 
revealed by considering $N$ dichotomic observables 
$A_k$ of A, such that each observable is compatible 
with some other ones, but not with all. We restrict 
ourselves to the usual case of projective 
measurements with two outcomes. When evaluated 
with a noncontextual hidden variable theory, 
the correlations of the compatible observables, 
satisfy inequalities, which can be violated 
by quantum states. Such a noncontextuality 
inequality reads
\begin{equation}
\sum_n x_n \big\langle \prod_{k \in {\cal E}_n}  
A_k \big\rangle \le 1 , \label{nci}
\end{equation}
where ${\cal E}_n$ are subsets of 
$\{ 1, \ldots , N \}$, of any possible size, 
and $\langle \ldots \rangle
=\operatorname{tr}(\rho_\mathrm{A} \ldots)$ 
is the average with respect to the density matrix 
$\rho_\mathrm{A}$. The observables $A_k$ 
and $A_l$ commute with each other when 
$k,l \in {\cal E}_n$. The coefficients $x_n$ 
are such that the maximum value of the left-hand 
side of eq.\eqref{nci}, is $1$ for noncontextual 
hidden-variable models, i.e., there are 
$a_k=\pm 1$, such that 
$\sum_n x_n \prod_{k \in {\cal E}_n}  a_k = 1$. 
The familiar CHSH and KCBS inequalities 
\cite{B,CHSH,KCBS}, for example, can be cast 
into the form \eqref{nci}. Let us define
\begin{equation}
C_d(\rho_\mathrm{A}) 
\equiv \sup_{{\bf A} \in {\cal A}_d} 
\operatorname{tr} \left( \rho_\mathrm{A} 
\sum_n x_n \prod_{k \in {\cal E}_n}  A_k \right) , 
\nonumber 
\end{equation}
where $d$ is the Hilbert space dimension of A, 
${\bf A}$ stands for $( A_1, \ldots , A_N )$, 
and ${\cal A}_d$ is the set of all ${\bf A}$ consisting 
of dichotomic observables $A_k$, such that 
$[A_k,A_l]=0$ for $k,l \in {\cal E}_n$. 
By construction, for a state $\rho_\mathrm{A}$ 
such that $C_d(\rho_\mathrm{A}) > 1$, 
there are observables $A_k$ with which 
inequality \eqref{nci} is violated. 

It has been shown that 
$C_d(\rho_\mathrm{A})
=c_d(\boldsymbol p)$, where $\boldsymbol p$ 
is the vector made up of the eigenvalues 
of $\rho_\mathrm{A}$, in decreasing order, 
and $c_d$ satisfies 
$c_d({\boldsymbol q}) \ge c_d({\boldsymbol p})$ 
when ${\boldsymbol q}$ majorizes 
${\boldsymbol p}$ \cite{PRA}. 
However, the functions $C_d$ do not obey 
the expansibility condition \eqref{dim}, and really 
depend on the dimension $d$. Due to 
the above-mentioned property of $c_d$, $C_d$ 
reaches its maximum, $C_d^{max} \equiv c_d(1)$, 
for pure states. We assume that there 
are dimensions $d$ for which $C_d^{max}>1$. 
For these values of $d$, inequality \eqref{nci} 
constitutes a proper contextuality test, since it is 
not always satisfied. Note that, for some 
state-dependent noncontextuality inequalities, 
$C_d^{max}$ does not depend on $d$, 
provided it is larger than some value  
\cite{AQBCC}. For CHSH inequality, for example, 
it is equal to $\sqrt{2}$, for $d \ge 4$. 
We also remark that the operators $A_k=a_k A$, 
where $A$ is any dichotomic observable, and 
$a_k=\pm 1$ are such that  
$\sum_n x_n \prod_{k \in {\cal E}_n}  a_k = 1$, 
obviously fulfill the above-stated commutation 
relations. Such a case describes a set-up 
that consists of $N$ measurement apparatuses 
corresponding to the same observable $A$. 
As a consequence, if all products 
in eq.\eqref{nci}, have an even number 
of terms, $C_d \ge 1$.  

To study the impact of the entanglement between 
A and B, on the local contextuality test \eqref{nci}, 
we define 
\begin{equation}
S^{lc}_{d}(\rho_\mathrm{A})\equiv C_{d_0}^{max} 
- \max_{\{ | \tilde \imath \rangle \} } 
tC_{d_0}\left( \sum_{i,j=1}^{d_1} 
\langle \tilde \imath | \rho_\mathrm{A} 
| \tilde \jmath \rangle 
| i \rangle \langle j | /t \right) , \label{gE}
\end{equation}
where $d_0$ is a specific dimension, 
$d_1=\min \{ d,d_0 \}$, $t=\sum_{i=1}^{d_1} 
\langle \tilde \imath | \rho_\mathrm{A} | 
\tilde \imath \rangle$, $\{ | i \rangle \}_{i=1}^{d_0}$ 
is an orthonormal basis, and the maximum is taken 
over the orthonormal bases 
$\{ | \tilde \imath \rangle \}_{i=1}^{d}$ of A. 
Since $\mathrm{C}_{d_0}$ obeys eq.\eqref{U}, 
the definition \eqref{gE} does not depend on 
any particular basis. For $d=d_0$, it reduces to 
$S^{lc}_{d_0}(\rho_\mathrm{A})=C_{d_0}^{max}
-C_{d_0}(\rho_\mathrm{A})$, but, for $d \ne d_0$, 
$S^{lc}_{d}$ and $C_{d}$ are not simply related to 
each other. As $C_{d_0}^{max}$ is the maximum 
value of $C_{d_0}$, the functions \eqref{gE} are 
positive. As a consequence of the result below, they 
also fulfill the properties enumerated in proposition 
\ref{Sp}. Note that there are state-independent 
noncontextuality inequalities \cite{P,M,C,YO} for which 
the definition \eqref{gE} gives zero for any state, 
and is thus of no use. In this case, no meaningful 
entanglement monotone $E$ can obey eq.\eqref{S} 
with $S^{lc}_d$, since the only possibility is $E=0$. 
In the following, we use the notation
 $\boldsymbol \lambda (M)$ for the vector made up 
of the eigenvalues of the Hermitian operator $M$, 
in decreasing order. 

\begin{prop} The functions \eqref{gE} satisfy 
\begin{equation}
S^{lc}_{d}(\rho_\mathrm{A})=C_{d_0}^{max}-
\sup_{{\boldsymbol \mu} \in \Lambda} 
\left( \sum_{i=1}^{d_0} \mu_i p_i \right) , \nonumber
\end{equation} 
where $\Lambda$ is the set of 
all vectors ${\boldsymbol 
\lambda}(\sum_n x_n \prod_{k \in {\cal E}_n}  A_k)$, 
with $(A_1, \ldots, A_N) \in {\cal A}_{d_0}$, 
$p_i=\lambda_i (\rho_\mathrm{A})$ for $i \le d$, 
and $p_i=0$ for $i > d$.
\end{prop}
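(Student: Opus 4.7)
The plan is to reduce both sides of the claimed identity to a common variational expression over $\Lambda$. Building on $C_{d_0}(\sigma)=c_{d_0}({\boldsymbol\lambda}(\sigma))$ recalled just above, I rewrite $C_{d_0}(\sigma)=\sup_{{\bf A}\in{\cal A}_{d_0}}\operatorname{tr}(\sigma H({\bf A}))$ with $H({\bf A})=\sum_n x_n\prod_{k\in{\cal E}_n}A_k$. Since ${\cal A}_{d_0}$ is invariant under ${\bf A}\mapsto U{\bf A}U^\dag$, the eigenbasis of $H({\bf A})$ can be rotated freely, and von Neumann's trace inequality gives $C_{d_0}(\sigma)=\sup_{{\boldsymbol\mu}\in\Lambda}\sum_{i=1}^{d_0}\mu_i\lambda_i(\sigma)$ with decreasing-order eigenvalues. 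Multiplying by $t$ absorbs the normalisation linearly, so that $tC_{d_0}(M/t)=\sup_{{\boldsymbol\mu}\in\Lambda}\sum_{i=1}^{d_0}\mu_i\lambda_i(M)$.

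I then swap the outer $\max$ over bases with the $\sup$ over ${\boldsymbol\mu}$ (both are maximisations) and, for fixed ${\boldsymbol\mu}$, study $\max_{\{|\tilde\imath\rangle\}}\sum_i\mu_i\lambda_i(M)$. Identifying $M$ with the compression $\Pi\rho_{\mathrm A}\Pi$, where $\Pi=\sum_{i=1}^{d_1}|\tilde\imath\rangle\langle\tilde\imath|$, I distinguish two subcases. If $d\le d_0$, then $d_1=d$, $\Pi=I_{\mathrm A}$, and ${\boldsymbol\lambda}(M)=(p_1,\ldots,p_d,0,\ldots,0)$ independently of the basis, so the inner maximum is $\sum_i\mu_i p_i$ automatically. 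If $d>d_0$, then choosing $|\tilde\imath\rangle$ for $i\le d_0$ as the top-$d_0$ eigenvectors of $\rho_{\mathrm A}$ makes ${\boldsymbol\lambda}(M)=(p_1,\ldots,p_{d_0})$, which exhibits $\sum_i\mu_i p_i$ as a lower bound on the inner maximum.

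For the matching upper bound when $d>d_0$, I would invoke Ky Fan's maximum principle $\sum_{k=1}^{j}\lambda_k(M)\le\sum_{k=1}^{j}p_k$ for each $j\le d_0$, and combine it with summation by parts using the ordering $\mu_1\ge\cdots\ge\mu_{d_0}$ to conclude $\sum_i\mu_i\lambda_i(M)\le\sum_i\mu_i p_i$. The principal obstacle is precisely the terminal Abel-summation boundary term $\mu_{d_0}(t-\sum_i p_i)$, whose sign is not uniformly favourable if $\mu_{d_0}<0$: the weak-majorization estimate alone is then insufficient, and one has to close the argument either by exploiting the trace constraints on $\rho_{\mathrm A}$ and $M$ more carefully, or by absorbing the residual slack through the outer $\sup$ over ${\boldsymbol\mu}\in\Lambda$, using the freedom to replace any suboptimal ${\bf A}$ by a more favourable element of ${\cal A}_{d_0}$.
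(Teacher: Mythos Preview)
Your route is the paper's route. The paper also rewrites $C_{d_0}(\omega)=\sup_{{\boldsymbol\mu}\in\Lambda}\bigl[{\boldsymbol\mu}\cdot{\boldsymbol\lambda}(\omega)\bigr]$ (citing \cite{PRA}), absorbs the normalisation to get $tC_{d_0}(\omega)=\sup_{{\boldsymbol\mu}\in\Lambda}\bigl[{\boldsymbol\mu}\cdot{\boldsymbol\lambda}(\Omega)\bigr]$, treats $d\le d_0$ by ${\boldsymbol\lambda}(\Omega)={\bf p}^{[d_0]}$, and for $d>d_0$ uses that $\Omega$ is a principal submatrix of $\rho_\mathrm{A}$ so that $R_j=\sum_{i\le j}\bigl[\lambda_i(\Omega)-p_i\bigr]\le 0$. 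It then writes the same Abel identity you do,
\[
{\boldsymbol\mu}\cdot\bigl[{\boldsymbol\lambda}(\Omega)-{\bf p}^{[d_0]}\bigr]
=\sum_{j=1}^{d_0-1}(\mu_j-\mu_{j+1})R_j+\mu_{d_0}R_{d_0},
\]
and attains equality at the eigenbasis of $\rho_\mathrm{A}$, exactly as in your lower bound.

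The obstacle you single out is genuine, and the paper does \emph{not} resolve it: it simply asserts the Abel sum is $\le 0$, which requires $\mu_{d_0}\ge 0$, yet nothing in the setup forces the smallest eigenvalue of $\sum_n x_n\prod_k A_k$ to be nonnegative (the CHSH computation later in the paper itself gives ${\boldsymbol\lambda}(T)=(r_+,r_-,-r_-,-r_+)$, so $\mu_{d_0}<0$ there). In fact the pointwise inequality ${\boldsymbol\mu}\cdot{\boldsymbol\lambda}(\Omega)\le{\boldsymbol\mu}\cdot{\bf p}^{[d_0]}$ can fail: with $d>d_0$, compressing onto eigenvectors $1,\ldots,d_0-1,d_0+1$ of $\rho_\mathrm{A}$ gives ${\boldsymbol\lambda}(\Omega)=(p_1,\ldots,p_{d_0-1},p_{d_0+1})$ and the difference is $\mu_{d_0}(p_{d_0+1}-p_{d_0})\ge 0$ when $\mu_{d_0}<0$. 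So you have not missed an idea the paper provides; you have been more scrupulous than the published argument. Your suggestion to recover only the weaker statement $\sup_{{\boldsymbol\mu}}{\boldsymbol\mu}\cdot{\boldsymbol\lambda}(\Omega)\le\sup_{{\boldsymbol\mu}}{\boldsymbol\mu}\cdot{\bf p}^{[d_0]}$, exploiting the freedom in ${\cal A}_{d_0}$ rather than fixing ${\boldsymbol\mu}$, is the right direction to try to repair it.
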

\begin{proof}
Consider any orthonormal 
bases $\{ | \tilde \imath \rangle \}_{i=1}^{d}$
and $\{ | i \rangle \}_{i=1}^{d_0}$, 
and define $\Omega=\sum_{i,j=1}^{d_1} 
\langle  \tilde \imath | \rho_\mathrm{A} 
| \tilde \jmath \rangle | i \rangle \langle j |$, 
where $d_1=\min \{ d,d_0 \}$, and 
the state $\omega=t^{-1} \Omega$, where 
$t=\operatorname{tr} \Omega$. It has been 
shown that $C_{d_0}(\omega) = 
\sup_{{\boldsymbol \mu} \in \Lambda} 
[{\boldsymbol \mu} \cdot 
{\boldsymbol \lambda} (\omega) ]$, 
where 
${\bf a} \cdot {\bf b}=\sum_{i=1}^{d_0} a_i b_i$ 
\cite{PRA}. Since $t{\boldsymbol \lambda} 
(\omega)={\boldsymbol \lambda} (\Omega)$, 
$tC_{d_0}(\omega) = 
\sup_{{\boldsymbol \mu} \in \Lambda} 
[{\boldsymbol \mu} 
\cdot {\boldsymbol \lambda} (\Omega) ]$. 

We denote 
${\boldsymbol \lambda}(\rho_\mathrm{A})$ 
by $\bf p$. For $d > d_0$, the matrix representation 
of $\Omega$, in the basis $\{ | i \rangle \}$, is 
a diagonal block of that of $\rho_\mathrm{A}$, 
in the basis $\{ | \tilde \imath \rangle \}$. 
Thus, $\bf p$ weakly submajorizes 
${\boldsymbol \lambda}(\Omega)$ \cite{MO}, 
and so, for $j=1, \ldots, d_0$, 
$R_j \equiv \sum_{i=1}^j [\lambda_i(\Omega)-p_i]$ 
is negative. Consequently, 
for any ${\boldsymbol \mu} \in \Lambda$, 
${\boldsymbol \mu} \cdot 
[{\boldsymbol \lambda}(\Omega)  - {\bf p}^{[d_0]}]
=\sum_{j=1}^{d_0-1}(\mu_j-\mu_{j+1})R_j 
+ \mu_{d_0}R_{d_0}  \le 0$, where ${\bf p}^{[d_0]}$ 
is made up of the $d_0$ largest $p_i$, 
in decreasing order. 
Hence, $tC_{d_0}(\omega) \le 
\sup_{{\boldsymbol \mu} \in \Lambda} 
({\boldsymbol \mu} \cdot {\bf p}^{[d_0]})$. 
For $d \le d_0$, this inequality becomes an equality, 
with ${\bf p}^{[d_0]}$ made up of the $p_i$, 
in decreasing order, followed by $d_0-d$ zeros, 
since 
${\boldsymbol \lambda}(\Omega)={\bf p}^{[d_0]}$.

For any $d$, 
when $\{ | \tilde \imath \rangle \}_{i=1}^{d}$ 
is such that $\rho_\mathrm{A}=\sum_{i=1}^d p_i 
| \tilde \imath \rangle \langle \tilde \imath |$, 
${\boldsymbol \lambda}(\Omega)={\bf p}^{[d_0]}$, 
which finishes the proof.
\end{proof}

\section{Monogamy of entanglement and local 
contextuality}\label{Melc}

The functions \eqref{gE} have all the required 
characteristics to satisfy eq.\eqref{S} with 
an entanglement monotone $E$. It remains to show 
that there is indeed such a measure $E$. 
This can be achieved, by using the convex 
roof method \cite{HHHH}, since, due to the convexity 
of $C_{d_0}$ \cite{PRA}, $S^{lc}_d$, given 
by eq.\eqref{gE}, is concave.
\begin{prop}\label{Ep}
Consider, for any composite system AB, and any state 
$\rho$ of AB,  
\begin{equation}
E^{cr}(\rho)\equiv \inf_{\{P_m,| \Psi_m \rangle \} 
\in {\cal D}(\rho)} \sum_m P_m {S}_d 
\left(\operatorname{tr}_\mathrm{B} 
| \Psi_m \rangle \langle \Psi_m | \right) , \label{E}
\end{equation} 
where ${\cal D}(\rho)$ is the set 
of all ensembles $\{P_m,| \Psi_m \rangle \}$ 
such that 
$\sum_m P_m 
| \Psi_m \rangle \langle \Psi_m |=\rho$, 
$d$ is the Hilbert space dimension of A, 
and ${S}_d$ are positive concave functions obeying 
eq.\eqref{Ss}, and vanishing for pure states.

The function $E^{cr}$ is an entanglement 
monotone, and satisfies eq.\eqref{S} with $S_d$.
\end{prop}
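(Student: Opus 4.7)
The plan is to apply the Vidal convex-roof construction. By Prop.~\ref{Sp}, on pure states of AB the function $g(|\Psi\rangle\langle\Psi|) := S_d(\operatorname{tr}_\mathrm{B} |\Psi\rangle\langle\Psi|)$ depends only on the Schmidt vector $\boldsymbol p$ of $|\Psi\rangle$, so I may write it as $s(\boldsymbol p)$. I would first verify that $g$ is a pure-state entanglement monotone, and then invoke the standard convex-roof theorem to lift this to full LOCC monotonicity of $E^{cr}$ on mixed states.

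For the pure-state step, suppose an LOCC protocol converts $|\Psi\rangle$ into the ensemble $\{q_k,|\Psi_k\rangle\}$ with Schmidt vectors $\boldsymbol\lambda_k$. The Nielsen--Jonathan--Plenio majorization criterion gives $\boldsymbol p \prec \sum_k q_k \boldsymbol\lambda_k$. The Schur-concavity of $s$, established in Prop.~\ref{Sp}, then yields $s(\boldsymbol p) \ge s(\sum_k q_k \boldsymbol\lambda_k)$; the assumed concavity of $S_d$ in the density matrix, restricted to matrices that are simultaneously diagonal, translates to concavity of $s$, giving $s(\sum_k q_k \boldsymbol\lambda_k) \ge \sum_k q_k s(\boldsymbol\lambda_k)$. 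Chaining these, $g(|\Psi\rangle\langle\Psi|) \ge \sum_k q_k g(|\Psi_k\rangle\langle\Psi_k|)$, and $g$ vanishes on product pure states because $s(1)=0$. The Vidal convex-roof theorem then guarantees that $E^{cr}$ defined by \eqref{E} is an entanglement monotone.

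For the identity $S_d(\rho_\mathrm{A}) = \max_{\rho \in \mathcal{C}(\rho_\mathrm{A})} E^{cr}(\rho)$, the upper bound follows from the concavity of $S_d$: for $\rho \in \mathcal{C}(\rho_\mathrm{A})$ and $\{P_m,|\Psi_m\rangle\} \in \mathcal{D}(\rho)$, one has $\rho_\mathrm{A} = \sum_m P_m \operatorname{tr}_\mathrm{B}|\Psi_m\rangle\langle\Psi_m|$, hence $S_d(\rho_\mathrm{A}) \ge \sum_m P_m S_d(\operatorname{tr}_\mathrm{B} |\Psi_m\rangle\langle\Psi_m|)$; taking the infimum over ensembles gives $E^{cr}(\rho) \le S_d(\rho_\mathrm{A})$. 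Equality is attained on any purification $|\Psi\rangle$ of $\rho_\mathrm{A}$, since every pure-state decomposition of $|\Psi\rangle\langle\Psi|$ is trivial, so $E^{cr}(|\Psi\rangle\langle\Psi|) = S_d(\rho_\mathrm{A})$.

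The hard part will be the first step: chaining the Nielsen--Jonathan--Plenio majorization criterion with the Schur-concavity and concavity of $s$ to secure the pure-state monotonicity, which is precisely what licenses the Vidal convex-roof theorem to produce the mixed-state monotone.
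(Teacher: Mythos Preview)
Your argument is correct, and the treatment of eq.~\eqref{S} (upper bound via concavity, equality on purifications) coincides with the paper's. The route to LOCC monotonicity, however, is genuinely different. The paper does not go through the Nielsen--Jonathan--Plenio majorization criterion at all. Instead it first shows that $E^{cr}$ is symmetric under exchanging A and B (because the two reduced density operators of a pure state share the same nonzero eigenvalues, and $S_d$ depends only on these by eq.~\eqref{Ss}); then, for one-sided Kraus operations $\{B_k\}$ on B, it invokes Vidal's concavity argument directly: concavity of $S_d$ gives $E^{cr}(\rho)\ge\sum_k p_k E^{cr}(\rho_k)$, and together with convexity of the convex roof this yields monotonicity under local operations on B. The A$\leftrightarrow$B symmetry transfers this to operations on A, and the attachment of a classical flag to an ancilla is handled explicitly. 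Your approach is tidier in one respect: since Jonathan--Plenio already characterizes full two-way LOCC on pure states, you never need the A$\leftrightarrow$B symmetry trick and can black-box the mixed-state lifting in a single appeal to Vidal's convex-roof theorem. The paper's approach, in exchange, avoids importing the Jonathan--Plenio result and keeps the individual LOCC building blocks (one-sided selective operations, convex mixing, ancilla attachment) visible in the proof.
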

\begin{proof}
We first consider that $\rho$ is not entangled. 
Then, by definition, $\rho$  is a mixture of pure 
product states $| \Psi_m \rangle$. The corresponding 
states $\operatorname{tr}_\mathrm{B} 
| \Psi_m \rangle \langle \Psi_m |$ are pure, 
and hence $E^{cr}(\rho)=0$. 

Let us now prove that interchanging A and B 
does not modify expression \eqref{E}. The reduced 
density operators $\operatorname{tr}_\mathrm{B} 
| \Psi_m \rangle \langle \Psi_m |$ 
and $\operatorname{tr}_\mathrm{A} 
| \Psi_m \rangle \langle \Psi_m |$, have the same 
nonvanishing eigenvalues. Thus, due to eq.\eqref{Ss}, 
${S}_d (\operatorname{tr}_\mathrm{B} 
| \Psi_m \rangle \langle \Psi_m | )$ 
in eq.\eqref{E}, can be replaced by ${S}_{d'} 
(\operatorname{tr}_\mathrm{A} 
| \Psi_m \rangle \langle \Psi_m | )$, where 
$d'$ is the Hilbert space dimension of B.

It follows from eq.\eqref{E} that $E^{cr}$ is convex 
\cite{V}. For operators $B_k$ of system B, such that 
$\sum_k B_k^{\dag} B_k^{\phantom{\dag}}$ is equal 
to its identity operator, the concavity of $S_d$ leads 
to $E^{cr}(\rho) \ge \sum_k p_k E^{cr}(\rho_k)$, 
where $p_k=\operatorname{tr} 
(B_k^{\dag} B_k^{\phantom{\dag}}\rho)$ 
and $\rho_k=
B_k^{\phantom{\dag}}\rho B_k^{\dag}/p_k$ \cite{V}. 
This inequality and the convexity of $E^{cr}$ 
ensure that $E^{cr}$ does not increase under local 
operations on B. With expression \eqref{E} rewritten 
as explained above, the same proof shows that this 
is also the case for local operations on A. 
Since $\rho_k$ and 
$\tilde \rho_k = \rho_k\otimes | k \rangle \langle k |$, 
where $| k \rangle$ is a pure state of an ancilla 
close to A, can be transformed into each other 
by local operations, 
$E^{cr}(\tilde \rho_k)=E^{cr}(\rho_k)$. Thus, 
$E^{cr}(\sum_k p_k \tilde \rho_k) \le E^{cr}(\rho)$, 
which finishes the proof that $E^{cr}$ is 
an entanglement monotone. 

Consider a given state $\rho_\mathrm{A}$ of A, and 
any state $\rho \in {\cal C}(\rho_\mathrm{A})$. 
The definition \eqref{E} and the concavity of $S_d$ 
give $E^{cr}(\rho) \le S_d(\rho_\mathrm{A})$. 
If $d' \ge d$, there are pure states $|\Psi \rangle$ 
of AB such that $\operatorname{tr}_\mathrm{B} 
|\Psi \rangle\langle \Psi | =\rho_\mathrm{A}$, 
and hence $E^{cr}(|\Psi \rangle\langle \Psi |)
=S_d(\rho_\mathrm{A})$. Consequently, 
$\max_{\rho \in {\cal C}(\rho_\mathrm{A})} 
E^{cr}(\rho)=S_d(\rho_\mathrm{A})$.
\end{proof}

We have thus, for a $d_0$-level system A, 
the monogamy inequality
\begin{equation}
E(\rho)+C_{d_0}(\rho_\mathrm{A}) 
\le C_{d_0}^{max} , \label{tdi}
\end{equation}
where $E$ is given by eq.\eqref{E} with 
the functions \eqref{gE}. Thus, 
the entanglement of A with B, as quantified 
by $E(\rho)$, restricts the value of the left side 
of inequality \eqref{nci}. In particular, for 
a state $\rho$ such that 
$E(\rho) \ge C_{d_0}^{max}-1$, 
this noncontextuality inequality cannot be violated. 
Equation \eqref{tdi} can also be read as an upper 
bound on the entanglement $E(\rho)$. 
In the extreme case of maximal violation of 
eq.\eqref{nci}, i.e., 
$C_{d_0}(\rho_\mathrm{A})=C_{d_0}^{max}$, 
it gives $E(\rho)=0$. 
There may be other entanglement monotones that 
coincide with the functions \eqref{gE} when $\rho$ 
is pure, and so satisfy inequality \eqref{tdi}. 
But, there is no entanglement monotone, for 
which eq.\eqref{tdi} is always an equality, since 
$C_{d_0}[\Lambda(\rho_\mathrm{A})]
\le C_{d_0}(\rho_\mathrm{A})$ for some local 
operations $\Lambda$ on A. Some noncontextuality 
inequalities \eqref{nci} are violated 
for any state $\rho$, which, in this case, necessarily 
satisfies $E(\rho) < C_{d_0}^{max}-1$. 
If the corresponding function $C_{d_0}$ is constant, 
$E=0$, and eq.\eqref{tdi} is trivially obeyed, and 
of no relevance. This is not surprising, since 
such a state-independent noncontextuality inequality 
is always maximally violated \cite{P,M,C,YO}. 
If $C_{d_0}$ is larger than unity, but not constant, 
eq.\eqref{tdi} still gives an upperbound, that depends 
on the entanglement beween A and B, for the left side 
of eq.\eqref{nci}.
 
\section{Computable measures of entanglement}
\label{Cem}

The monogamy inequality \eqref{tdi} involves 
an unusual entanglement monotone, defined 
from the considered noncontextuality inequality. 
Moreover, even familiar entanglement monotones 
are difficult to evaluate for an arbitrary density 
matrix $\rho$ \cite{H}. An exception is 
the negativity $(\| \rho^\Gamma \|-1)/2$, 
where $\| M \|=\operatorname{tr} \sqrt{MM^\dag}$ 
denotes the trace norm of operator $M$, and 
$\rho^\Gamma$ is a partial transpose of 
$\rho$ \cite{HHHH,ZHSL,ViWe}. 
There are entangled states with vanishing 
negativity. Other quantities can be used to detect 
entanglement, e.g., $\| {\cal R}(\rho) \|$, where 
${\cal R}$ is a matrix realignment map, which is 
not greater than 1 when $\rho$ is not entangled 
\cite{R1,R2,CW}. In ref.\cite{CAF}, a lower bound 
is derived for the entanglement of formation, 
in terms of
\begin{equation}
x\equiv \max \{ \| \rho^\Gamma \| 
, \| {\cal R}(\rho)\| \} , \label{x}
\end{equation}
which is readily computable. We show below that 
a similar bound can be obtained for 
any entanglement monotone of the form \eqref{E}. 
\begin{prop}
Consider an entanglement monotone $E^{cr}$ given 
by eq.\eqref{E}, two systems, A and B, of Hilbert 
space dimensions $d$ and $d'$, respectively, 
and the function $f$ defined, 
for $y\in [1,d^*]$, where $d^*=\min\{d,d'\}$, by 
\begin{equation}
f(y)\equiv \inf_{{\bf p} \in {\cal F}(y)}  s({\bf p}) , \label{f}
\end{equation}
where $s$ is given by eq.\eqref{Ss}, and ${\cal F}(y)$ 
is the set of the $d^*$-component probability vectors 
${\bf p}$, such that 
$(\sum_{i=1}^{d^*} \sqrt{p_i})^2=y$.

For any state $\rho$ of AB, $E^{cr}(\rho) \ge co(f)(x)$, 
where $co(f)$ is the convex hull of $f$, and $x$ is 
given by eq.\eqref{x}. 
\end{prop}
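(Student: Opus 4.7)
The plan is to reduce the inequality to its pure-state version via the convex-roof definition \eqref{E}, and then lift it to mixed states using convexity of the two quantities entering \eqref{x}. The key ingredient is the standard identity, for a pure state $|\Psi\rangle$ of AB with Schmidt decomposition $|\Psi\rangle=\sum_{i=1}^{d^*}\sqrt{p_i}\,|i\rangle|i'\rangle$,
\[
\bigl\||\Psi\rangle\langle\Psi|^\Gamma\bigr\|=\bigl\|\mathcal{R}(|\Psi\rangle\langle\Psi|)\bigr\|=\Bigl(\sum_{i=1}^{d^*}\sqrt{p_i}\Bigr)^2 ,
\]
which I would verify by diagonalising $|\Psi\rangle\langle\Psi|^\Gamma$ in the Schmidt basis (its eigenvalues are the $p_i$ and the $\pm\sqrt{p_ip_j}$ for $i<j$) and by reading off the singular values of the realigned matrix. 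Thus $x(|\Psi\rangle\langle\Psi|)=(\sum_i\sqrt{p_i})^2$, and $\mathbf{p}\in\mathcal{F}(x)$. Since $E^{cr}(|\Psi\rangle\langle\Psi|)=s(\mathbf{p})$ by Proposition \ref{Ep}, the pure-state bound $E^{cr}(|\Psi\rangle\langle\Psi|)\ge f(x)\ge co(f)(x)$ follows from the definition \eqref{f}.

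To promote this to arbitrary $\rho$, I would establish two structural facts. First, $x$ is convex in $\rho$: the trace norm is convex and both partial transposition and realignment are linear, so $\rho\mapsto\|\rho^\Gamma\|$ and $\rho\mapsto\|\mathcal{R}(\rho)\|$ are convex, and their pointwise maximum inherits this property; hence for any decomposition $\rho=\sum_m P_m|\Psi_m\rangle\langle\Psi_m|$ one has $x(\rho)\le\sum_m P_m x_m$ with $x_m=x(|\Psi_m\rangle\langle\Psi_m|)\in[1,d^*]$. Second, $co(f)$ is non-decreasing on $[1,d^*]$: since $s$ vanishes on a pure probability vector, $f(1)=co(f)(1)=0$, and a convex function that equals zero at the left endpoint of an interval and is everywhere non-negative cannot decrease (a local decrease would, by the secant inequality, force later values to be negative).

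Chaining these ingredients, for any $\rho$ and any ensemble in $\mathcal{D}(\rho)$,
\[
\sum_m P_m s(\mathbf{p}_m)\ge\sum_m P_m co(f)(x_m)\ge co(f)\!\Bigl(\sum_m P_m x_m\Bigr)\ge co(f)\bigl(x(\rho)\bigr) ,
\]
where the first inequality combines the pure-state bound with $co(f)\le f$, the second is Jensen's inequality for the convex function $co(f)$, and the third uses monotonicity of $co(f)$ together with the convexity of $x$. Taking the infimum over $\mathcal{D}(\rho)$ yields $E^{cr}(\rho)\ge co(f)(x)$.

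The main obstacle is the pure-state identification of $x$ with $(\sum_i\sqrt{p_i})^2$; the remainder is routine convex analysis built on the convex-roof structure and the monotonicity of $co(f)$. A secondary minor point is to confirm that $\sum_m P_m x_m$ stays in $[1,d^*]$, which is automatic since each $x_m$ does.
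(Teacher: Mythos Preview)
Your proof is correct and follows essentially the same route as the paper: establish the pure-state identity $\|\,|\Psi\rangle\langle\Psi|^\Gamma\|=\|\mathcal{R}(|\Psi\rangle\langle\Psi|)\|=(\sum_i\sqrt{p_i})^2$, use it to get $s(\mathbf{p}_m)\ge f(y_m)$ for each term of an ensemble, pass to $co(f)$ via Jensen, and finish with monotonicity of $co(f)$ (proved from $co(f)(1)=0$, nonnegativity and convexity) combined with convexity of the trace norm to compare $\sum_m P_m y_m$ with $x$. The only cosmetic difference is that the paper bounds $\|\rho^\Gamma\|$ and $\|\mathcal{R}(\rho)\|$ separately by $\sum_m P_m y_m$ before taking the maximum, whereas you argue directly that $x$ is convex as a maximum of convex functions; the content is identical.
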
 
\begin{proof}
Let us first show that $co(f)$ exists and is 
nondecreasing. Since $s$ is positive, $f \ge 0$, 
and thus, $f$ has a convex hull \cite{HUL}. It is 
the maximum of the convex functions not larger 
than $f$. As $f \ge 0$, $co(f)$ is positive.
The only element of ${\cal F}(1)$ is ${\bf p}=1$. 
Thus, $f(1)=0$, and hence, $co(f)(1)=0$. 
Consider $y_1$ and $y_2$ such that 
$1 \le y_1 \le y_2 \le d^*$. We have 
$y_1=\tau + (1-\tau)y_2$ 
with $\tau \in [0,1]$. So, using the convexity 
and positivity of $co(f)$, and $co(f)(1)=0$, 
we get $co(f)(y_1) \le co(f)(y_2)$.

Consider any ensemble 
$\{ P_m, | \Psi_m \rangle \} \in {\cal D}(\rho)$, 
and denote by ${\bf p}^{(m)}$ the $d^*$-component 
vector made up of the squared Schmidt coefficients 
of $| \Psi_m \rangle$, in decreasing order, 
possibly completed with zeros. By definition of $f$, 
$\sum_m P_m 
S_d(\operatorname{tr}_\mathrm{B} \rho_m)
\ge \sum_m P_m f (y_m) $, where 
$\rho_m= | \Psi_m \rangle\langle \Psi_m |$,
and $y_m=[\sum_{i=1}^{d^*} (p_i^{(m)})^{1/2}]^2$. 
The right side of this inequality is not smaller than 
$co(f)(y)$ where $y=\sum_m P_m y_m$. 
Since $y_m=\| \rho_m^\Gamma \|
=\|{\cal R} (\rho_m) \|$ \cite{R1,CAF2,ViWe}, 
and the trace norm is convex, $x \le y$. Using 
this inequality, and the monotonicity of $co(f)$, 
leads to the result.
\end{proof}

The above proposition, and the monogamy 
inequality \eqref{tdi}, give, for a $d_0$-level 
system A,
\begin{equation}
C_{d_0}(\rho_\mathrm{A}) 
\le C_{d_0}^{max}-co(f)(x) , 
\label{cof}
\end{equation}
where $f$ is given by eq.\eqref{f} with 
the functions \eqref{gE}, and $d^*=\min\{d_0,d'\}$ 
with $d'$ the Hilbert space dimension of B. 
For states $\rho$ such that 
$co(f)(x) \ge C_{d_0}^{max}-1$, 
eq.\eqref{nci} cannot be violated. As noted above, 
if all products in eq.\eqref{nci}, have an even 
number of terms, $C_d \ge 1$, and so, $co(f)$ 
can reach $C_{d_0}^{max}-1$ only for $x=d^*$, 
i.e., for maximally entangled states. 
However, even in this case, eq.\eqref{cof} can be 
useful to determine states too entangled to violate 
inequality \eqref{nci}. As an example, consider 
$d^*=d_0=4$, and the CHSH inequality, for which, 
as shown below, $C_4(\rho_\mathrm{A})
=\max \{ 1, C'_4(\rho_\mathrm{A}) \}$, where
\begin{equation}
C'_4(\rho_\mathrm{A})= 
\sqrt{2\left[ (p_1-p_4)^2+(p_2-p_3)^2 \right]} , 
\label{CHSH}
\end{equation}
with $p_i=\lambda_i(\rho_\mathrm{A})$. 
The functions \eqref{gE}, defined with $C'_4$, 
have all the necessary properties to obey 
eq.\eqref{cof} with the corresponding $f$. 
Using the method of Lagrange multipliers, 
we find 
$$f(y)=\sqrt{2}-(3\sqrt{y}+\sqrt{32-7y})^{3/2} 
(\sqrt{32-7y}-\sqrt{y})^{1/2}/32.$$ This function 
is convex, and is hence equal to its convex hull. 
It increases from 0 to $C_4^{max}=\sqrt{2}$. 
Consequently, for states $\rho$ such that 
$x \ge 2.95$, $C'_4(\rho_\mathrm{A}) \le 1$, 
and thus, the local CHSH inequality is always 
satisfied. 
\begin{proof}
For CHSH inequality, $C_d(\rho_\mathrm{A})
= \sup_{{\bf A} \in {\cal A}_d} \langle T \rangle/2$, 
where $\langle T \rangle
=\operatorname{tr}(\rho_\mathrm{A} T)$, 
and $T=A_1(A_2+A_4)+A_3(A_2-A_4)$. 
For $A_k=A_1$, $\langle T \rangle=2$, 
and hence $C_d \ge 1$. We are thus 
interested only in observables $A_k$ such 
that $\langle T \rangle$ can be larger than 2. 
For $d=4$, $A_k$ can be written as 
$A_k=\eta_k(2\Pi_k-I)$ where $\eta_k=\pm 1$, 
$I$ is the identity operator, and $\Pi_k$ is 
a projector of rank not greater than 2. Using 
this expression, one finds $T^2= 4 I \pm 16 R$, 
where $R=[\Pi_1,\Pi_3] [\Pi_2,\Pi_4]$ 
\cite{L}. If $R=0$, the eigenvalues of $T$ can 
only be $2$ and $-2$, and so 
$\langle T \rangle \le 2$. 
We thus search for the projectors $\Pi_k$ for 
which $R\ne 0$. If two commuting 
$\Pi_k$ and $\Pi_l$, obey $\Pi_k\Pi_l=0$ 
or $\Pi_k\Pi_l=\Pi_k$, then $R=0$. Consequently, 
the sought projectors are of rank 2, and 
such that, for $[\Pi_k,\Pi_l]=0$, $\Pi_k\Pi_l$ is 
a rank-$1$ projector. This gives 
$\Pi_k=|k \rangle \langle k |
+|k' \rangle \langle k' |$, with
\begin{multline}
|1 \rangle=| \tilde 1 \rangle,
|1' \rangle=|2 \rangle=| \tilde 2 \rangle , 
|2' \rangle=| \tilde 3 \rangle , 
|3 \rangle=\nu_1 | \tilde 2 \rangle 
+ \hat \nu_1 | \tilde 3 \rangle ,  \\
|3' \rangle=\nu_1 | \tilde 1 \rangle 
+ \hat \nu_1  | \tilde 4 \rangle , 
|4 \rangle=\nu_2 | \tilde 1 \rangle 
+ \hat \nu_2 | \tilde 2 \rangle , 
|4' \rangle= \hat \nu_2 | \tilde 3 \rangle 
+ \nu_2 | \tilde 4 \rangle , 
\nonumber
\end{multline}
where $\{ | \tilde \imath \rangle \}_{i=1}^4$ 
is any orthonormal basis, 
$|\hat \nu_k|^2+\nu_k^2=1$, and 
$\nu_k \in [0,1]$. For these projectors, 
the eigenvalues of $[\Pi_k,\Pi_{k+2}]$, 
where $k=1$ or 2, are $\pm i \nu_k|\hat \nu_k|$. 
Since these two commutators commute 
with each other, and $\operatorname{tr}R=0$, 
there are $A_k$ such that 
${\boldsymbol \lambda}(R)=4r(1,1, -1,-1)$, 
where $r \in ]0,1]$. For these $A_k$, 
$\mathrm{tr}(A_kA_l)=0$ for commuting 
$A_k$ and $A_l$, and hence 
$\operatorname{tr}T=0$. 
So, ${\boldsymbol \lambda}(T)
=(r_+,r_-, -r_-,-r_+)$, where $r_\pm=2\sqrt{1\pm r}$. 
Maximising 
${\boldsymbol \lambda}(T) \cdot {\boldsymbol p}$ 
over $r$, leads to eq.\eqref{CHSH}.
\end{proof} 

\section{Conclusion}

In summary, a monogamy inequality for entanglement 
and local contextuality, has been derived. It involves 
an entanglement monotone that depends on 
the considered noncontextuality inequality, and 
the Hilbert space dimension of the local system. 
It essentially results from the relations between 
the entanglement of the global state and the entropy 
of the local state, and between the eigenvalues 
of the local state and its ability to disobey 
the noncontextuality inequality. 
Thus, other entanglement monotones, 
different from the one we have built, may satisfy 
the same monogamy inequality. A consequence 
of the found monogamy, is that there are global 
states so entangled that they cannot violate 
the noncontextuality inequality. 
The obtained monogamy inequality relates 
entanglement {\it per se} to local contextuality. 
It would thus be of interest to find out if there 
are global states, that are Bell-local \cite{B}, 
but still entangled enough to prevent 
the violation of a local noncontextuality 
inequality.

\end{document}